\newtheorem{theorem}{Theorem}
\newtheorem{lemma}{Lemma}
\newtheorem{Def}{Definition}
\newtheorem{Prop}{Proposition}
\newtheorem{Exam}{Example}
\newtheorem{Cor}{Corollary}
\newtheorem{rem}{Remark}
\newtheorem{fact}{Fact}
\def\<{\langle}
\def\>{\rangle}
\def\be{\begin{equation*}}
\def\ee{\end{equation*}}
\def\bea{\begin{eqnarray*}}
\def\eea{\end{eqnarray*}}
\newcommand{\comment}[1]{}
\def\ast{\dagger}
\newcommand{\tr}{\operatorname{Tr}}
\newcommand{\cE}{\mathcal{E}}
\newcommand{\cF}{\mathcal{F}}
\newcommand{\cI}{\mathcal{I}}
\newcommand{\cH}{\mathcal{H}}
\newcommand{\cD}{\mathcal{D}}
\newcommand{\cG}{\mathcal{G}}
\newcommand{\ox}{\otimes}
\newcommand{\ip}[2]{\langle #1 , #2\rangle}
\newcommand{\ket}[1]{\ensuremath{\left|#1\right\rangle}}
\newcommand{\op}[2]{|#1\rangle \langle #2|}
\newcommand{\Tr}{\mathop{\mathrm{tr}}\nolimits}
\newcommand{\setft}[1]{\mathrm{#1}}
\newcommand{\pos}[1]{\setft{Pos}\left(#1\right)}
\def\X{\mathcal{X}}
\def\Y{\mathcal{Y}}
\def\H{\mathcal{H}}
\def\Z{\mathcal{Z}}
\def\E{\mathcal{E}}
\newcommand{\lin}[1]{\setft{L}\left(#1\right)}
\begin{document}

\title {When is the Chernoff Exponent for Quantum Operations finite?}

\author{\IEEEauthorblockN{Nengkun Yu}\\
\IEEEauthorblockA{Centre for Quantum Software and Information,\\
   Faculty of Engineering and Information Technology, \\ University of
   Technology Sydney, NSW 2007, Australia\\
Email: nengkunyu@gmail.com}\\
 \IEEEauthorblockN{Li Zhou}\\
  \IEEEauthorblockA{Max Planck Institute for Security and Privacy, Bochum, Germany\\
  {zhou31416@gmail.com}}
}
 
\maketitle
\begin{abstract}
We consider the problem of testing two hypotheses of quantum operations in a setting of many
uses where an arbitrary prior probability distribution is given. The Chernoff exponent for quantum operations is investigated to track the minimal average error probability of discriminating two quantum operations asymptotically. We answer the question, ``When is the Chernoff exponent for quantum operations finite?''
We show that either two quantum operations can be perfectly distinguished with finite uses, or the minimal discrimination error decays exponentially with respect to the number of uses asymptotically. That is, the Chernoff exponent is finite if and only if the quantum operations can not be perfectly distinguished with finite uses. This rules out the possibility of super-exponential decay of error probability.
Upper bounds of the Chernoff exponent for quantum operations are provided. 
\end{abstract}

\section{Introduction}
A fundamental problem in quantum information theory is to test a device that may be prepared
for implementing one of many quantum operations. The testing treated in the framework of quantum mechanics is performed by inputting a quantum state and performing a quantum measurement. The general noncommutative feature and the complex structure of quantum operations
make quantum statistics a much richer field than its classical counterpart.

In the degenerate case where the outputs of the quantum operations are fixed, the freedom of choosing input states becomes useless. The asymptotic behavior of the
average error, in discriminating a set of quantum states $\{\rho_1^{\ox n},\ldots,
\rho_r^{\ox n}\}$ with prior probability distribution $\{\Pi_1,\ldots,
\Pi_r\}$ is of great interest. In ~\cite{Parth01}, Parthasarathy
showed that the average error decays exponentially, asymptotically.
Significant efforts have been made to identify the Chernoff exponent as the optimal error exponent.
In two breakthrough papers, \cite{ACMMABV} and~\cite{NussbaumSzkola06}, the closed-form of the optimal error exponent was obtained, which can be regarded as
the quantum generalization of the Chernoff bound in classical
hypothesis testing~\cite{Chernoff52}. Li proved that multiple Chernoff exponent equals the minimal mutual Chernoff exponent \cite{Li2016}.

It is highly desirable to generalize the results of quantum states to quantum operations. Given that considerable experimental effort has been devoted to the field of quantum mechanics to prepare quantum systems and measure quantum states, it is of fundamental importance to develop a theory that can discriminate the different quantum operations. We note that for classical channel discrimination, the optimal exponential error rate problem has been well understood, where it is proven that adaptive choice does not improve the exponential error rate in these settings \cite{Hayashi09}.

Where quantum operations are only allowed to be used once, the problem has been extensively studied, with fruitful results.
By employing Holevo-Helstrom's celebrated theorem on the one-copy quantum state discrimination \cite{Helstrom1967,Holevo1972}, a completely bounded trace norm, known as the diamond norm, was introduced to characterize the difference between quantum channels by Kitaev \cite{Kitaev1997}. This norm becomes a fundamental
tool in almost all aspects of quantum information science \cite{Aharonov1997,Watrous2009,Watrous2015,Puzzuoli2016} since it is the most physically meaningful notion of distance between quantum operations.

The problem becomes much more complicated when quantum operations are used multiple times \cite{HHDW2010,CMW2016,TW2016}. 
Much effort has been devoted to characterizing the conditions of perfect distinguishability, in the sense that two quantum operations can be distinguished without error by a finite number of uses \cite{Acin2001,Duan2007,ZZG2007,LQ2008,Laing2009}.
Unlike classical channel discrimination, unitary operations exist that cannot be distinguished without error for single-use, while multiple uses can help achieve perfect discrimination. 
A complete solution to this problem is obtained in \cite{Duan2009} with a feasible, necessary, and sufficient condition.

In this paper, we investigate the concept of Chernoff exponent for quantum operations to characterize the asymptotic behavior of the average error probability of distinguishing given quantum operations under any prior probability distribution. Suppose we have a quantum device that is secretly chosen from $\{\cE, \cF\}$, and a known set of two quantum operations according to a prior probability distribution $\{\Pi_0,\Pi_1\}$. Our goal is to identify whether the device is $\cE$ or $\cF$ by using this device many times. We explore the Chernoff exponent for two quantum operations $\cE$ and $\cF$ to track the optimal error probability by using the following definition:
\begin{equation}\label{defineChernoffbound}
\xi_{\cE,\cF} = -\varlimsup_{n\rightarrow\infty} \frac{\log P_{err,min,n}}{n}
\end{equation}
where $P_{err,min,n}$ denotes the infimum discrimination error over all possible output states $\rho_n$ and $\sigma_n$ as illustrated in Figure 1 where the quantum device is used $n$ time.

\begin{figure}
\center
\includegraphics[width=8cm]{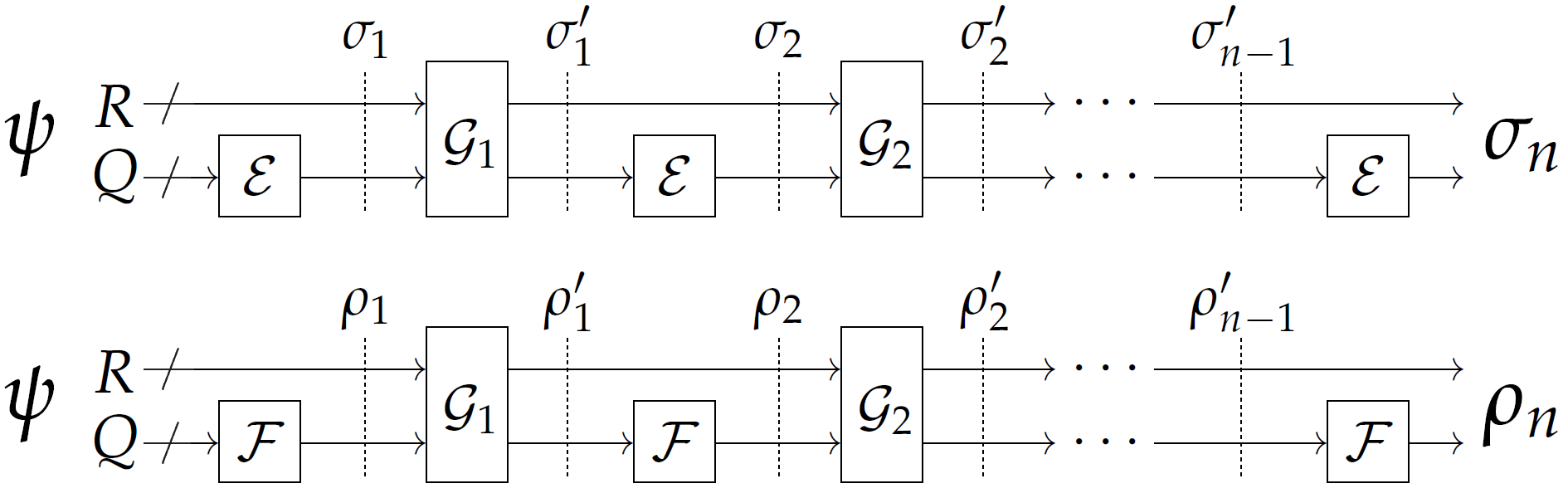}
\caption{Model of discriminating two quantum operations $\cE$ and $\cF$ on system $Q$ with $n$ uses by employing an arbitrary ancilla system $R$. $\cG_i$ are arbitrary quantum operations applied on the joint system $RQ$ and between two uses of the device. One can show that any arbitrary adaptive strategy can be translated into this model, by following the fact that any quantum measurement can be fully implemented by a quantum operation with outcomes stored as qubits.}  
\label{figure1}
\end{figure}

Notice that all possible strategies can be described by, or translate to, the model showed in Figure 1 with a sufficiently large ancilla system. 
This model is the most general scheme, and quantum operations $\mathcal{G}_i$ are freely chosen. For instance, parallel uses of devices can always be simulated by sequential uses and employing swap operators.

We show that the Chernoff exponent for quantum operations is finite if and only if they cannot be distinguished perfectly with finite uses. More precisely, we show that the average error probability decays at most, according to exponential function for quantum operations, if the quantum operations can not be perfectly distinguishable. This indicates that the error probability can never decay super-exponentially, such as $\exp(-\alpha n^2)$. Computable upper bounds on the Chernoff exponent for quantum operations are provided. Finally, we generalize our results to deal with multiple quantum operations.

\section{Notations and Preliminaries}

We use the symbols $\cH,\X,\Y,\Z$ to denote finite-dimensional Hilbert spaces over complex numbers and
$\lin{\cH}$ to denote the set of linear operators mapping from $\cH$ into itself.
For Hermitian matrices $A,B$, we use $\ip{A}{B}=\tr(A^{\dag}B)=\tr(AB)$ to denote their inner product.
Let $\mathrm{Pos}(\cH)\subset\lin{\cH}$ be the set of positive (semidefinite) matrices, and $\cD(\cH) \subset\mathrm{Pos}(\cH)$ is the set of positive matrices with trace one.
A pure quantum state of $\cH$ is just a normalized vector $\ket{\psi}\in\cH$, while a general quantum state is characterized by a density operator $\rho\in\cD(\cH)$.
For simplicity, we use $\psi$ to represent the density operator of a pure state $\ket{\psi}$ which is just the projector $\psi = \op{\psi}{\psi}$.
A density operator $\rho$ can always be decomposed into a convex combination of pure states:
\be
\rho = \sum_{k=1}^{n}p_k\op{\psi_k}{\psi_k},
\ee
where the coefficients $p_k$ are strictly positive numbers and add up to one. The support of $\rho$ is defined as $\mathrm{supp}(\rho) = \mathrm{span}\{\ket{\psi_k}: 1\leq k\leq n\}$.
We say two pure states $\ket{\psi}$ and $\ket{\phi}$ are orthogonal if and only if their inner product $\ip{\psi}{\phi}$ is equal to zero, and the orthogonality of two density operators $\rho$ and $\sigma$ is defined by the orthogonality of their supports, namely, $\rho$ and $\sigma$ are orthogonal if and only if $\mathrm{supp}(\rho)\perp\mathrm{supp}(\sigma)$.
Two density operators $\rho$ and $\sigma$ are said to be disjoint if $\mathrm{supp}(\rho)\cap\mathrm{supp}(\sigma) = \{0\}$ and joint if the intersection of their support contains some non-zero vectors.

There are two commonly used measures to characterize the difference between the quantum states: trace distance and fidelity. The trace distance $D$ between two density operators $\rho$ and $\sigma$ is defined as
\be
D(\rho,\sigma) \equiv \frac{1}{2}\mathrm{Tr}|\rho-\sigma|
\ee
where we define $|A|\equiv\sqrt{A^\dag A}$ to be the positive square root of $A^\dag A$.

The fidelity of states $\rho$ and $\sigma$ is defined to be
\be
F(\rho,\sigma) \equiv \mathrm{Tr}\sqrt{\sqrt{\rho}\sigma\sqrt{\rho}}.
\ee
For pure states $\ket{\psi}$ and $\ket{\phi}$, $F(\psi,\phi)=|\langle\psi|\phi\rangle|$.

The strong concavity property for the fidelity is quite useful, which can be formalized as
\begin{fact}[\cite{NielsenC00}]\label{strong concavity}
For quantum states $\rho_i$, $\sigma_i$ and probability distributions $(p_0,p_1,\cdots,p_n)$ and $(q_0,q_1,\cdots,q_n)$
\be
F\left( \sum_ip_i\rho_i,\sum_iq_i\sigma_i \right) \ge \sum_{i=0}^n\sqrt{p_iq_i}F(\rho_i,\sigma_i).
\ee
\end{fact}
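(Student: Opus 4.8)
The plan is to derive strong concavity from \emph{Uhlmann's theorem}, which expresses the fidelity as an optimization over purifications: for any states $\rho,\sigma$ one has $F(\rho,\sigma)=\max\,|\bracket{\psi}{\phi}|$, the maximum ranging over all purifications $\ket{\psi}$ of $\rho$ and $\ket{\phi}$ of $\sigma$ lying in a common enlarged space $\cH\ox\cH_R$. The strategy is to exhibit \emph{specific} purifications of the two mixtures whose overlap already equals the right-hand side, so that the Uhlmann maximum can only be larger.

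First I would fix, for each index $i$, purifications $\ket{\psi_i}$ of $\rho_i$ and $\ket{\phi_i}$ of $\sigma_i$ in $\cH\ox\cH_R$ with $\cH_R$ taken large enough to accommodate all of them, chosen (by absorbing a phase into the reference register) so that Uhlmann's optimum is attained for each pair, i.e.\ $\bracket{\psi_i}{\phi_i}=F(\rho_i,\sigma_i)$ is real and nonnegative. Next I introduce an auxiliary system with orthonormal basis $\{\ket{i}\}$ and set
\[
\ket{\psi}=\sum_i\sqrt{p_i}\,\ket{\psi_i}\ket{i},\qquad \ket{\phi}=\sum_i\sqrt{q_i}\,\ket{\phi_i}\ket{i}.
\]
The key observation is that tracing out both $\cH_R$ and the auxiliary register from $\op{\psi}{\psi}$ yields $\sum_i p_i\rho_i$: the orthogonality $\bracket{i}{j}=\delta_{ij}$ annihilates every cross term, leaving $\sum_i p_i\op{\psi_i}{\psi_i}$, whose reduction on $\cH$ is $\sum_i p_i\rho_i$. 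Hence $\ket{\psi}$ is a purification of $\sum_i p_i\rho_i$, and symmetrically $\ket{\phi}$ purifies $\sum_i q_i\sigma_i$.

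Applying Uhlmann's theorem to the two mixtures then gives $F\!\left(\sum_i p_i\rho_i,\sum_i q_i\sigma_i\right)\ge|\bracket{\psi}{\phi}|$ for these particular purifications, and a direct computation using $\bracket{i}{j}=\delta_{ij}$ to collapse the double sum to its diagonal yields
\[
\bracket{\psi}{\phi}=\sum_i\sqrt{p_iq_i}\,\bracket{\psi_i}{\phi_i}=\sum_i\sqrt{p_iq_i}\,F(\rho_i,\sigma_i),
\]
which is precisely the claimed lower bound, finishing the argument.

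The one genuinely delicate point, and the step I would be most careful about, is the alignment of phases in the choice of purifications: I must arrange that \emph{every} $\bracket{\psi_i}{\phi_i}$ simultaneously equals the nonnegative real number $F(\rho_i,\sigma_i)$, not merely that its modulus does. This is what guarantees that the diagonal terms add coherently with the correct sign so that $\bracket{\psi}{\phi}$ comes out real, nonnegative, and equal to $\sum_i\sqrt{p_iq_i}\,F(\rho_i,\sigma_i)$ with no cancellation; once this is secured, the remainder is routine bookkeeping with the orthonormal ancilla states.
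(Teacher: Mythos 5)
Your proposal is correct and coincides with the standard argument: the paper itself states this fact without proof (citing Nielsen and Chuang), and the proof found there is precisely your construction — phase-aligned optimal purifications $\ket{\psi_i}$, $\ket{\phi_i}$ tensored with an orthonormal flag register $\ket{i}$, followed by Uhlmann's theorem applied to the resulting purifications of the two mixtures. The phase-alignment issue you single out is indeed the only delicate point, and you resolve it correctly by absorbing a phase into each $\ket{\phi_i}$, which remains a purification of $\sigma_i$.
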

If $\rho_i=\psi_i$ and $\sigma_i=\phi_i$ are all pure states, we obtain

\begin{align*}
F\left( \sum_ip_i\psi_i,\sum_iq_i\phi_i \right) \ge \sum_{i=0}^n\sqrt{p_iq_i}F(\psi_i,\phi_i)\\
=\sum_{i=0}^n|\langle\sqrt{p_i}\psi_i|\sqrt{q_i}\phi_i\rangle|.
\end{align*}

\begin{Def}\label{def:purification}
	We say that a pure state $\ket{\psi} \in \H_A\otimes\H_B$
	is a purification of some state $\rho$
	if $\Tr_A({\psi})=\rho$.
\end{Def}

\begin{fact}[Uhlmann's theorem, \cite{Uhl77}]
	\label{fac:Uhlmann}
	Given quantum states $\rho$, $\sigma$, and a purification $\ket{\psi}$ of $\rho$,
	it holds that $F({\rho},{\sigma})=\max_{\ket{\phi}} | \langle \phi | \psi \rangle | $,
	where the maximum is ranging over all purifications of $\sigma$.
\end{fact}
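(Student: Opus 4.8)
The plan is to reduce the statement to the variational characterization of the trace norm. First I would bring all purifications into a common normal form. Writing $\ket{\Gamma} = \sum_i \ket{i}\ox\ket{i}$ for the unnormalised maximally entangled vector, the vector $(\sqrt{\rho}\ox I)\ket{\Gamma}$ is a canonical purification of $\rho$, because tracing out the ancilla returns $\sqrt{\rho}\,(\sqrt{\rho})^\dagger=\rho$ by Definition~\ref{def:purification}. The key structural fact I would invoke is that, in a purifying space of large enough dimension, any two purifications of the same state differ only by an isometry on the ancilla; hence the fixed purification $\ket{\psi}$ of $\rho$ can be written as $(\sqrt{\rho}\ox W)\ket{\Gamma}$, and every purification $\ket{\phi}$ of $\sigma$ takes the form $(\sqrt{\sigma}\ox V)\ket{\Gamma}$, for suitable unitaries $W,V$.

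Second, I would compute the overlap explicitly using the transpose identity $\bra{\Gamma}(A\ox B)\ket{\Gamma}=\tr(A B^{T})$. This collapses the inner product to $\langle\phi|\psi\rangle=\tr\bigl(\sqrt{\sigma}\sqrt{\rho}\,(V^\dagger W)^{T}\bigr)$. The crucial observation is that, with $\ket{\psi}$ (hence $W$) held fixed, as $\ket{\phi}$ ranges over all purifications of $\sigma$ the operator $U:=(V^\dagger W)^{T}$ ranges over all unitaries on the relevant space, so that $\max_{\ket{\phi}}|\langle\phi|\psi\rangle|=\max_{U}|\tr(\sqrt{\sigma}\sqrt{\rho}\,U)|$.

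Third, I would close the argument with the polar/singular-value decomposition, which yields the variational formula $\max_{U}|\tr(MU)|=\tr|M|=\norm{M}_1$, the maximum running over unitaries. Applying this with $M=\sqrt{\sigma}\sqrt{\rho}$ and using $\norm{A}_1=\tr\sqrt{A^\dagger A}$ together with $(\sqrt{\sigma}\sqrt{\rho})^\dagger\sqrt{\sigma}\sqrt{\rho}=\sqrt{\rho}\,\sigma\sqrt{\rho}$ gives $\max_{\ket{\phi}}|\langle\phi|\psi\rangle|=\norm{\sqrt{\sigma}\sqrt{\rho}}_1=\tr\sqrt{\sqrt{\rho}\,\sigma\sqrt{\rho}}=F(\rho,\sigma)$, which is exactly the claim.

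The main obstacle, I expect, is not any single computation but the bookkeeping lemma that all purifications of a fixed state are unitarily equivalent on the ancilla. The delicate point is matching ancilla dimensions: if the given $\ket{\psi}$ lives in a purifying space strictly larger than $\mathrm{rank}(\rho)$, one must pad with zeros and argue with isometries rather than unitaries, checking that this enlargement does not shrink the set of overlaps one maximises over. Once that normal form is secured, the remaining steps are the standard trace-norm variational argument and are essentially routine.
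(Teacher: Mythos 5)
The paper does not actually prove this statement: it is imported as a Fact with a citation to Uhlmann's original article, so there is no internal proof to compare yours against, and it can only be judged on its own merits. On those merits your argument is correct, and it is the standard proof: the canonical purification $(\sqrt{\rho}\ox I)\ket{\Gamma}$, the identity $\bra{\Gamma}(A\ox B)\ket{\Gamma}=\tr(AB^{T})$, and the variational formula $\max_{U}|\tr(MU)|=\norm{M}_1$ applied to $M=\sqrt{\sigma}\sqrt{\rho}$, for which $\norm{M}_1=\tr\sqrt{\sqrt{\rho}\,\sigma\sqrt{\rho}}=F(\rho,\sigma)$. The one point you flag but leave open, the ancilla-dimension bookkeeping, closes exactly as you predict and is worth writing down: with $\ket{\psi}=(\sqrt{\rho}\ox W)\ket{\Gamma}$ and $\ket{\phi}=(\sqrt{\sigma}\ox V)\ket{\Gamma}$ for isometries $W,V$ into the (possibly oversized) ancilla, the operator $C=(V^{\dag}W)^{T}$ is in general only a contraction, but H\"older's inequality still gives the upper bound $|\langle\phi|\psi\rangle|=|\tr(\sqrt{\sigma}\sqrt{\rho}\,C)|\leq\norm{\sqrt{\sigma}\sqrt{\rho}}_1$ for every purification $\ket{\phi}$; conversely, if $U_0$ is the optimal unitary furnished by the polar decomposition of $\sqrt{\sigma}\sqrt{\rho}$, then $V=W\overline{U_0}$ is again an isometry (since $W^{\dag}W=I$) and realizes $(V^{\dag}W)^{T}=U_0$, so the bound is attained. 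Hence enlarging the ancilla neither shrinks nor inflates the set of achievable overlaps, and the maximum over purifications equals $\norm{\sqrt{\sigma}\sqrt{\rho}}_1=F(\rho,\sigma)$ as claimed. Two minor remarks: the paper's Definition~\ref{def:purification} traces out the \emph{first} tensor factor while your computation traces out the second, a purely notational mismatch; and the statement implicitly assumes the common space is large enough that $\sigma$ admits at least one purification in it, since otherwise the maximum runs over an empty set.
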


The following fact connects the trace distance and the fidelity between
two states.
\begin{fact}[Fuchs-van de Graaf inequalities~\cite{FuchsG99}]
	\label{fac:tracefidelityequi}
	For quantum states $\rho$ and $\sigma$, it holds that
\be
1-F(\rho,\sigma) \le D(\rho,\sigma) \le \sqrt{1-F(\rho,\sigma)^2}.
\ee
	For pure states $\ket{\phi}$ and $\ket{\psi}$, we have
	\begin{align*}
	D({\phi}, {\psi})= \sqrt{1 - F({{\phi}},{{\psi}})^2}=\sqrt{1 - |\langle\phi|\psi\rangle|^2}.
	\end{align*}
\end{fact}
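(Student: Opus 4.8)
The plan is to treat the pure-state identity first, since it underlies both mixed-state inequalities, and then bootstrap to the general case using Uhlmann's theorem (Fact~\ref{fac:Uhlmann}) for the upper bound and a variational characterization of the fidelity for the lower bound.

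First I would establish the pure-state formula. Writing $\psi=\op{\psi}{\psi}$ and $\phi=\op{\phi}{\phi}$, the Hermitian operator $\psi-\phi$ is supported on the (at most two-dimensional) span of $\ket{\psi}$ and $\ket{\phi}$ and is traceless. Choosing an orthonormal basis of that span with $\ket{\psi}=(1,0)^{\mathrm T}$ and $\ket{\phi}=(c,s)^{\mathrm T}$, where $c=\ip{\psi}{\phi}$ and $s=\sqrt{1-|c|^2}\ge 0$ after a phase choice, a direct $2\times 2$ computation gives $\det(\psi-\phi)=-(1-|c|^2)$, so the eigenvalues of $\psi-\phi$ are $\pm\sqrt{1-|c|^2}$. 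Hence $D(\phi,\psi)=\tfrac{1}{2}\tr|\psi-\phi|=\sqrt{1-|c|^2}=\sqrt{1-F(\phi,\psi)^2}$, which is exactly the second claim.

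For the upper bound $D(\rho,\sigma)\le\sqrt{1-F(\rho,\sigma)^2}$, I would invoke Uhlmann's theorem to choose purifications $\ket{\phi}$ of $\rho$ and $\ket{\psi}$ of $\sigma$ on a common space $\H\ox\H$ with $|\ip{\phi}{\psi}|=F(\rho,\sigma)$. The pure-state formula just proved gives $D(\phi,\psi)=\sqrt{1-F(\rho,\sigma)^2}$. Since $\rho$ and $\sigma$ are obtained from $\phi$ and $\psi$ by the partial trace, which is completely positive and trace preserving, the contractivity of the trace distance under such maps yields $D(\rho,\sigma)\le D(\phi,\psi)$, and the bound follows.

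The lower bound $1-F(\rho,\sigma)\le D(\rho,\sigma)$ is the delicate step, and here the naive Uhlmann argument fails: monotonicity makes $D(\phi,\psi)\ge D(\rho,\sigma)$, so it controls the quantity from the wrong side. Instead I would use the variational characterization $F(\rho,\sigma)=\min_{\{E_m\}}\sum_m\sqrt{p_mq_m}$ over POVMs $\{E_m\}$, with $p_m=\tr(E_m\rho)$ and $q_m=\tr(E_m\sigma)$ (see \cite{NielsenC00}), together with the Helstrom form $D(\rho,\sigma)=\max_{\{E_m\}}\tfrac{1}{2}\sum_m|p_m-q_m|$. Fixing the POVM that attains the fidelity, the elementary classical estimate
\[
1-\sum_m\sqrt{p_mq_m}=\tfrac{1}{2}\sum_m(\sqrt{p_m}-\sqrt{q_m})^2\le\tfrac{1}{2}\sum_m|p_m-q_m|,
\]
which holds because $\sqrt{p_m}+\sqrt{q_m}\ge|\sqrt{p_m}-\sqrt{q_m}|$, gives $1-F(\rho,\sigma)\le\tfrac{1}{2}\sum_m|p_m-q_m|\le D(\rho,\sigma)$. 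The main obstacle is precisely securing the $\min$-over-measurements expression for the fidelity, since it is this variational equality, rather than mere monotonicity, that makes the lower bound come out in the correct direction.
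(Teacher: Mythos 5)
You should note first that the paper itself gives no proof of this statement: it is imported as a known result with a citation to \cite{FuchsG99}, so there is no internal argument to compare yours against. Judged on its own, your proof is correct, and it is essentially the standard argument (the one in \cite{NielsenC00} and in the original Fuchs--van de Graaf paper). The $2\times 2$ computation for pure states is right: $\psi-\phi$ is traceless with determinant $-(1-|\ip{\psi}{\phi}|^2)$, giving eigenvalues $\pm\sqrt{1-|\ip{\psi}{\phi}|^2}$ and hence the claimed identity. The upper bound then follows correctly from Uhlmann's theorem (Fact~\ref{fac:Uhlmann}) together with contractivity of the trace distance under partial trace. For the lower bound you correctly diagnose why purifications cannot work (monotonicity bounds $D(\rho,\sigma)$ from above by $D(\phi,\psi)$, which is useless in that direction) and instead invoke the Fuchs--Caves measured-fidelity formula $F(\rho,\sigma)=\min_{\{E_m\}}\sum_m\sqrt{p_m q_m}$; granting that formula and fixing an optimal POVM, the chain
\[
1-F(\rho,\sigma)=\tfrac{1}{2}\sum_m\bigl(\sqrt{p_m}-\sqrt{q_m}\bigr)^2\le\tfrac{1}{2}\sum_m|p_m-q_m|\le D(\rho,\sigma)
\]
is valid, the last step holding because measurement statistics can only decrease trace distance. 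Your proof therefore rests on two external ingredients not stated among the paper's facts --- contractivity of the trace distance under CPTP maps, and the measured-fidelity characterization --- both standard and both found in \cite{NielsenC00}. You are right that the second is the genuinely load-bearing one; a fully self-contained write-up would need to include its short Cauchy--Schwarz-based proof, since replacing it by mere monotonicity of fidelity (Fact~\ref{fac:monotonequantumoperation}) would give the inequality in the wrong direction.
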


The trace distance is a static measure quantifying how close two quantum states are and is closely related to the discrimination of quantum states.
Let us consider the two hypotheses, $H_0$ and $H_1$. Hypothesis $H_0$ assumes that a given unknown quantum state is equal to $\rho_0$, and Hypothesis $H_1$ assumes that a given unknown quantum state is equal to $\rho_1$. We assume that the prior probability distribution of $\rho_0$ and $\rho_1$ are $\Pi_0$ and $\Pi_1$, respectively, which add up to one.

A physical strategy to discriminate between these two hypotheses is to perform a positive-operator valued measure (POVM) on the quantum state with two outcomes, 0 and 1.
Such a POVM has two elements $\{E_0,E_1\}$ satisfying $E_0,E_1\in \mathrm{Pos}(\cH)$ and $E_0+E_1=I$, where $I$ is the identity matrix of $\cH$.
The aim of quantum state discrimination is to find the elements $E_0$ and $E_1$ that minimize the total error $P_{err}$, which is
\be
P_{err} = \Pi_0\mathrm{Tr}[E_1\rho_0]+\Pi_1\mathrm{Tr}[E_0\rho_1].
\ee

This optimal error has been identified by Helstrom as expressed in
the following equation
\be
P_{err,min} = \frac{1}{2}\left(1-\mathrm{Tr}|\Pi_1\rho_1-\Pi_0\rho_0|\right).
\ee

A quantum operation $\cE$ from $\lin{\cH}$ to $\lin{\Z}$ is a completely positive and trace-preserving map used to describe the evolution of an open quantum system.
A quantum operation $\cE$ can always be represented using the Kraus representation as
\be
\cE(\rho) = \sum_{i=1}^kE_i\rho E_i^\dag,
\ee
where $\{E_i\}_{i=1,\cdots,k}$ are the Kraus operators of $\cE$ satisfying $\sum_{i=1}^kE_i^\dag E_i=I$, the identity of $\cH$.

The following fact states that the fidelity between
two states is non-decreasing under quantum operations.
\begin{fact}[\cite{NielsenC00}]
	\label{fac:monotonequantumoperation}
	For states $\rho$, $\sigma$, and quantum operation $\E(\cdot)$, it holds that
	\begin{align*}
	F({\E(\rho)},{\E(\sigma)})&\geq F({\rho},{\sigma}).
	\end{align*}
\end{fact}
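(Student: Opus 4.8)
The plan is to combine Uhlmann's theorem (Fact~\ref{fac:Uhlmann}) with a dilation of the quantum operation $\cE$ to an isometry, so that the action of $\cE$ on purifications becomes transparent and the inequality reduces to the unitary invariance of inner products.

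First I would fix a purification $\ket{\psi_\rho} \in \cH \ox \cH_B$ of $\rho$ and apply Uhlmann's theorem to select a purification $\ket{\psi_\sigma}$ of $\sigma$ that saturates the fidelity, namely $F(\rho,\sigma) = |\langle\psi_\sigma|\psi_\rho\rangle|$. Next, starting from the Kraus representation $\cE(\cdot) = \sum_{i=1}^{k} E_i(\cdot)E_i^\dag$ with $\sum_{i=1}^{k} E_i^\dag E_i = I$, I would build the Stinespring isometry
\[
V = \sum_{i=1}^{k} E_i \ox \ket{i}_E \colon \cH \to \Z \ox \cH_E,
\]
where $\{\ket{i}_E\}_{i=1}^{k}$ is an orthonormal basis of a $k$-dimensional ancilla $\cH_E$. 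A one-line computation gives $V^\dag V = \sum_{i} E_i^\dag E_i = I$, so $V$ is an isometry, and $\Tr_E(V(\cdot)V^\dag) = \cE(\cdot)$.

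The key observation is that an isometry sends purifications to purifications: since tracing out both $E$ and $B$ from $(V \ox I_B)\ket{\psi_\rho}$ returns $\Tr_E(V\rho V^\dag) = \cE(\rho)$, the vector $(V \ox I_B)\ket{\psi_\rho}$ is a purification of $\cE(\rho)$, and likewise $(V \ox I_B)\ket{\psi_\sigma}$ purifies $\cE(\sigma)$. Applying Uhlmann's theorem now to the output states $\cE(\rho)$ and $\cE(\sigma)$ --- which expresses $F(\cE(\rho),\cE(\sigma))$ as the \emph{maximum} overlap over all purifications --- and evaluating this maximum at the particular pair just constructed yields
\[
F(\cE(\rho),\cE(\sigma)) \ge \bigl| \langle\psi_\sigma| (V^\dag V \ox I_B) |\psi_\rho\rangle \bigr| = |\langle\psi_\sigma|\psi_\rho\rangle| = F(\rho,\sigma),
\]
where the first equality is precisely the isometry property $V^\dag V = I$ and the last is the saturating choice from Uhlmann's theorem.

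The argument requires no optimization on the output side; the entire content is that the optimal input-side overlap is transported unchanged through the isometry. The one point one must get right --- and the only place the logic could slip --- is the direction of the inequality: because Uhlmann's identity is a maximum over all purifications of the outputs, evaluating it at our specific constructed pair can only produce a \emph{lower} bound on $F(\cE(\rho),\cE(\sigma))$, which is exactly the monotonicity claimed. I do not expect any serious obstacle beyond correctly tracking which subsystems are traced out at each stage.
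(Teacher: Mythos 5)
Your proof is correct. Note that the paper itself does not prove Fact~\ref{fac:monotonequantumoperation} at all --- it is quoted as a standard known result (the textbook reference being Nielsen and Chuang), so there is no in-paper argument to compare against; what you have written is essentially the canonical textbook proof, and it is the right one. Each step checks out: $V = \sum_i E_i \ox \ket{i}_E$ is an isometry precisely because of the trace-preservation condition $\sum_i E_i^\dag E_i = I$; $(V \ox I_B)\ket{\psi_\rho}$ and $(V \ox I_B)\ket{\psi_\sigma}$ are indeed purifications of $\cE(\rho)$ and $\cE(\sigma)$ on the common purifying system $\cH_E \ox \cH_B$; and you correctly use Uhlmann's theorem asymmetrically --- once as an equality (to pick the optimal purification pair on the input side) and once as a lower bound (evaluating the output-side maximum at your constructed pair). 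The inner product is then preserved exactly by $V^\dag V \ox I_B = I$, which closes the argument. A worthwhile side remark: your dilation argument also makes transparent why the same proof gives monotonicity under partial trace (take $V$ to be a trivial embedding), which is the special case invoked in the paper's proof of Lemma~\ref{fidelity}.
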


Quantum operations $\cE$ and $\cF$ are said to be perfectly distinguishable with finite uses if there exists a strategy illustrated as Figure \ref{figure1} such that $\sigma_n$ and $\rho_n$ are orthogonal.

Two conditions introduced by \cite{Duan2009} characterize the perfect distinguishability between quantum operations.
\begin{Def}
Two quantum operations, $\cE$ and $\cF$, acting on the same principal system, denoted by $Q$, are said to be disjoint if there is an auxiliary system $R$, and a pure state $\ket{\psi^{RQ}}$, such that $(\cI^R\otimes\cE^Q)(\psi^{RQ})$ and $(\cI^R\otimes\cF^Q)(\psi^{RQ})$ are disjoint, where
$\cI^R$ is the identity operation on $R$, and the superscripts only identify which systems the operations acted on. Otherwise, they are called joint.
\end{Def}
Intuitively, this disjointness guarantees that the outputs do not have a common part with the carefully chosen input. 
This disjointness is necessary to achieve perfect distinguishability. Otherwise, according to an inductive argument, there is always a non-zero common part between the outputs for an arbitrary strategy with finite uses.

Another relationship is to ensure that non-orthogonal states, $\rho$ and $\sigma$, exist such that $(\cI^R\otimes\cE^Q)(\rho^{RQ})$ and $(\cI^R\otimes\cF^Q)(\sigma^{RQ})$ become orthogonal, then one can distinguish $(\cI^R\otimes\cE^Q)(\rho^{RQ})$ and $(\cI^R\otimes\cF^Q)(\sigma^{RQ})$  without error. This is the final step of any strategy to achieve perfect distinguishability between $\cE$ and $\cF$.

Interestingly, these two conditions are not only necessary but also sufficient for the perfect distinguishability between quantum operations \cite{Duan2009} .
\begin{Prop}
Two quantum operations $\cE$ and $\cF$ are perfectly distinguishable if and only if: 1). They are disjoint; 2). They can map some non-orthogonal states into orthogonal states.
\end{Prop}
We remark here that ancillary systems are also allowed to achieve perfect discrimination.

In the following, we give an analytical characterization of the negation of the second condition, i.e., that  $\cE$ and $\cF$ cannot map some non-orthogonal states into orthogonal states even with the help of ancillary system. 
\begin{rem}
For $\cE(\cdot) = \sum_{i} E_i\cdot E_i^\dag$ and $\cF(\cdot) = \sum_{j} F_j\cdot F_j^\dag$, Condition 2) of Proposition 1 is equivalent to $I\notin\mathrm{span}\{( {E_i}^{\dag}F_j); 1\leq i,j\leq m\} $.
\end{rem}

We want to emphasize this proof of the characterization is precisely the same as given in \cite{Duan2009}. We provide the following argument for the readers' convenience.

Without loss of generality, we assume that $\cE$ and $\cF$ have the same number of Kraus operators by adding zero Kraus operators, if necessary. That is,
$\cE(\cdot) = \sum_{i=1}^m E_i\cdot E_i^\dag$ and $\cF(\cdot) = \sum_{j=1}^m F_j\cdot F_j^\dag$, cannot make non-orthogonal states orthogonal. That is, if $\rho^{RQ}$ and $\sigma^{RQ}$ are not orthogonal, then $(\cI^R\otimes\cE^Q)(\rho^{RQ})$ and $(\cI^R\otimes\cF^Q)(\sigma^{RQ})$ are not orthogonal. Equivalently, if pure states $\rho^{RQ}=\op{\psi^{RQ}}{\psi^{RQ}}$ and $\sigma^{RQ}=\op{\phi^{RQ}}{\phi^{RQ}}$ are not orthogonal, then $(\cI^R\otimes\cE^Q)(\rho^{RQ})$ and $(\cI^R\otimes\cF^Q)(\sigma^{RQ})$ are not orthogonal.
In other words, 
\begin{align*}
\tr[(\cI^R\otimes\cE^Q)(\psi^{RQ})(\cI^R\otimes\cF^Q)(\phi^{RQ})]=0 
\end{align*} 
implies
\[
 \langle\psi^{RQ}|\phi^{RQ}\rangle=0.
\]
That is, if $\forall 1\leq i,j\leq m$,
\begin{align*} 
(I^R\otimes E_i^Q) \op{\psi^{RQ}}{\psi^{RQ}}(I^R\otimes E_i^Q)^{\dag}
\end{align*} 
is orthgonal to 
\begin{align*} 
(I^R\otimes F_j^Q)\op{\phi^{RQ}}{\phi^{RQ}} (I^R\otimes F_j^Q)^{\dag}],
\end{align*} 
then
\begin{align*}
\langle\psi^{RQ}|\phi^{RQ}\rangle=0.
\end{align*} 
The above condition is equivalent to for $\ket{\psi^{RQ}}$ and $\ket{\phi^{RQ}}$, if
 \begin{align*}
\langle\psi^{RQ}|(I^R\otimes E_i^Q)^{\dag}(I^R\otimes F_j^Q)\ket{\phi^{RQ}}=0
\end{align*} 
for all $ 1\leq i,j\leq m$, then
 \[
 \langle\psi^{RQ}|\phi^{RQ}\rangle=0.\]
That is, if $\forall 1\leq i,j\leq m$
 \begin{align*}
 \langle\psi^{RQ}|(I^R\otimes {E_i}^{\dag}F_j) \ket{\phi^{RQ}}=0,
 \end{align*} 
 then
\[
 \langle\psi^{RQ}|\phi^{RQ}\rangle=0.
 \]
 For any $M\in \lin{\cH_Q}$, one can find $\ket{\phi^{RQ}}$ and $\ket{\psi^{RQ}}$ such that $M=\tr_R \op{\phi^{RQ}}{\psi^{RQ}}$. 
 We know that $\forall 1\leq i,j\leq m$
  \begin{align*}
  \tr (M{E_i}^{\dag}F_j)=\langle\psi^{RQ}|(I^R\otimes {E_i}^{\dag}F_j) \ket{\phi^{RQ}}=0,
 \end{align*} 
 implies
 \[
     \tr M=0.
    \]
That is satisfied if and only if $I^{RQ}\in\mathrm{span}\{(I^R\otimes {E_i}^{\dag}F_j); 1\leq i,j\leq m\} $, which in turn is equivalent to $I\in\mathrm{span}\{E_i^{\dag}F_j; 1\le i,j\le m\}$. 

Therefore, $\cE$ and $\cF$ can map some non-orthogonal states into orthogonal states, Condition 2) of Proposition 1, is equivalent to $I\notin\mathrm{span}\{( {E_i}^{\dag}F_j); 1\leq i,j\leq m\} $.
 
\section{Two useful Lemmas}

The following lemma shows that if two quantum operations are joint, then for any input state, the output states always have a common semi-definite positive component whose size is positive and depends only on the operations.
\begin{lemma} \label{jointbound}
If $\mathcal{E}$ and $\mathcal{F}$ are joint, there exists $\eta>0$, depending only on $\mathcal{E}$ and $\mathcal{F}$, such that for any quantum state $\rho$ on a
potentially larger Hilbert space $RQ$, there is a matrix $A$, such that $0\leq A\leq (\cI^R\otimes\mathcal{E})(\rho^{RQ}),(\cI^R\otimes\mathcal{F})(\rho^{RQ})$ and $\tr(A)\geq \eta$.
\end{lemma}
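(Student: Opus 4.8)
The plan is to prove the statement directly by a compactness argument, after two reductions that shrink the space of inputs to a compact set. Throughout, when $\rho$ lives in $R\ox Q$ I write $\cE(\rho)$ for $(\cI^R\ox\cE^Q)(\rho)$. First I would reduce to pure inputs: if the bound holds with a single constant $\eta$ for every pure state, then for a mixed $\rho=\sum_k p_k\op{\psi_k}{\psi_k}$ I pick $M_k$ with $0\le M_k\le\cE(\psi_k),\cF(\psi_k)$ and $\tr M_k\ge\eta$, and set $M=\sum_k p_kM_k$; linearity of $\cE,\cF$ gives $0\le M\le\cE(\rho),\cF(\rho)$ with $\tr M\ge\eta$. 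Next I would bound the ancilla: for a pure $\ket{\psi}^{RQ}$ the Schmidt rank between $R$ and $Q$ is at most $\dim Q=:d$, and since $\cI^R$ acts trivially I may assume $\dim R=\dim Q=d$. The input then ranges over the compact set of pure states in $\C^d\ox\C^d$, and any constant obtained there depends only on $\cE,\cF$.

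On this compact set define
\[
g(\psi)=\max\{\tr M: 0\le M\le\cE(\psi),\ 0\le M\le\cF(\psi)\}.
\]
Since $M\le\cE(\psi)$ forces $\Op{supp}(M)\subseteq\Op{supp}(\cE(\psi))$, one has $g(\psi)>0$ exactly when the supports of $\cE(\psi)$ and $\cF(\psi)$ intersect nontrivially, i.e.\ exactly when the two outputs are joint. Thus the hypothesis that $\cE,\cF$ are joint says precisely that $g(\psi)>0$ for every $\psi$, and the lemma is equivalent to $\eta:=\inf_\psi g(\psi)>0$. Upper semicontinuity of $g$ is immediate: given $\psi_n\to\psi_*$ with optimizers $M_n$, any limit point $M_*$ of the bounded sequence $(M_n)$ satisfies $0\le M_*\le\cE(\psi_*),\cF(\psi_*)$, so $\limsup_n g(\psi_n)\le g(\psi_*)$.

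The heart of the proof, and the step I expect to be the main obstacle, is to rule out a sequence $\psi_n\to\psi_*$ with $g(\psi_n)\to0$; by compactness this is all that remains. The difficulty is genuine, because $g$ is an SDP value that is in general only upper, not lower, semicontinuous: two rank-one projectors onto nearby but distinct lines have $g=0$, yet lie arbitrarily close to a coincident pair with $g=1$, so pointwise positivity alone does not yield a uniform bound, and (unlike fidelity) the common part is not controlled by any distance that is monotone under $\cE,\cF$. To defeat this I would exploit the rigid structure of the outputs. Writing $\ket{w_i}=(\cI\ox E_i)\ket{\psi}$, we have $\cE(\psi)=\sum_i\op{w_i}{w_i}$, and a Cauchy--Schwarz/operator argument gives, for any $\ket{v}=\sum_i a_i\ket{w_i}$, the rank-one lower bound $\cE(\psi)\ge\norm{\vec a}^{-2}\op{v}{v}$ (and symmetrically for $\cF$). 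A common support vector $\ket{v_*}$ at the limit $\psi_*$ exists because $\psi_*$ is itself a joint input, and it is representable from both sides with finite coefficients; hence the only mechanism by which $g(\psi_n)$ could decay to $0$ is a collapse of the weight that $\cE(\psi_n)$ and $\cF(\psi_n)$ place on a genuinely common direction. I would show this cannot happen: collapsing weight would force $\Op{supp}(\cE(\psi_*))$ and $\Op{supp}(\cF(\psi_*))$ to become disjoint, contradicting that $\psi_*$ is joint. Making this rigorous across possible rank jumps is the delicate point; I would handle it by invoking the curve selection lemma to replace the sequence by a real-analytic curve $\gamma(s)\to\psi_*$, tracking the supports and the relevant smallest positive eigenvalues by analytic (Kato) perturbation theory, and extracting the contradiction from jointness at $\gamma(0)=\psi_*$. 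With such sequences excluded, $\inf_\psi g(\psi)$ is attained and strictly positive, giving the desired $\eta>0$ depending only on $\cE$ and $\cF$.
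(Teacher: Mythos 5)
Your setup is sound and matches the paper's: reduce to pure inputs by convexity, cap the ancilla by Schmidt rank, define $g(\psi)=\max\{\tr M: 0\le M\le \cE(\psi),\,0\le M\le\cF(\psi)\}$, and reduce the lemma to $\inf_\psi g(\psi)>0$. You also correctly identify the crux: $g$ is only upper semicontinuous, so compactness plus pointwise positivity proves nothing. But at exactly that crux your proof stops being a proof. The claim that a sequence $\psi_n\to\psi_*$ with $g(\psi_n)\to 0$ ``would force $\Op{supp}(\cE(\psi_*))$ and $\Op{supp}(\cF(\psi_*))$ to become disjoint'' does not follow, because support intersections are not continuous in either direction. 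What convergence gives you is that $\Op{supp}(\cE(\psi_n))$ contains a subspace converging (with eigenvalues bounded below) to $\Op{supp}(\cE(\psi_*))$, plus possibly extra directions carrying vanishing weight; the nonzero intersection guaranteed by jointness of $\psi_n$ can live entirely in those extra low-weight directions, while the ``stable'' parts of the two supports intersect trivially for every $n$ -- two sequences of subspaces can have trivial intersection even when their limits intersect nontrivially (e.g.\ $\Op{span}\{e_1,e_2\}$ and $\Op{span}\{e_2+n^{-1}e_4,\,e_3\}$ in $\C^4$). So no contradiction with jointness at the single limit point $\psi_*$ arises, and the appeal to the curve selection lemma and Kato perturbation theory is a program, not an argument. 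Nor can abstract properties of $g$ rescue this: $g$ is concave in the input state and u.s.c., yet an everywhere-positive, concave, u.s.c.\ function on a compact convex set can still have infimum $0$, and semialgebraicity does not help either; any correct proof must use more structure than pointwise positivity along a minimizing sequence.

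The paper closes this gap by a different mechanism: semidefinite programming duality. It writes $g(\rho)$ as a primal SDP, passes to the dual $\min_Y \ip{B}{Y}$ over $Y\geq 0$ with $\Phi^{\ast}(Y)=M+N\geq I$ (strong duality holds by strict dual feasibility, $Y=I$), then exchanges the infimum over input states with the minimization over dual variables and uses compactness of the set of output pairs $B$ to conclude that the combined infimum is attained at some $(\rho_0,Y_0)$. Attainment is the whole point: once a genuine minimizer $\rho_0$ exists, jointness needs to be invoked only at that one state, producing a nonzero $A$ with $0\le A\le\cE(\rho_0),\cF(\rho_0)$ and hence $\eta=\ip{\cE(\rho_0)}{M}+\ip{\cF(\rho_0)}{N}\geq\ip{A}{M+N}\geq\tr A>0$. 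In other words, duality substitutes for the lower semicontinuity that you (correctly) observe is missing, converting ``jointness at a limit of a minimizing sequence'' -- which, as above, is too weak -- into ``jointness at an attained minimizer,'' which suffices. If you want to complete your approach, the ingredient to supply is such an attainment or uniform-lower-bound mechanism (e.g.\ via the dual representation of $g$), not a finer perturbation analysis of supports along the sequence.
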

\begin{proof}
It is
straightforward to verify that we only need to consider $\rho$ to be a pure state. Thus, according to Schmidt decomposition, we can assume that $\rho$ is a quantum state in $\mathcal{H}_{RQ}=\mathcal{H'}\otimes\mathcal{H}$ with the dimension of $\mathcal{H}'$ being equal to the dimension of $\mathcal{H}$, where $\mathcal{H}$ is the Hilbert space of $Q$.

Our goal is to show
$$
\eta>0
$$
where $\eta$ is defined as
\begin{align*}
\eta:=\inf_{\rho\in \cD(\mathcal{H}_{RQ})}\sup_{\substack{0\leq X\leq (\cI^R\otimes\mathcal{E})(\rho^{RQ}),\\0\leq X\leq(\cI^R\otimes\mathcal{F})(\rho^{RQ})}}  \tr X \\=\inf_{\rho\in \cD(\mathcal{H}_{RQ})}\max_{\substack{0\leq X\leq (\cI^R\otimes\mathcal{E})(\rho^{RQ}),\\0\leq X\leq(\cI^R\otimes\mathcal{F})(\rho^{RQ})}}  \tr X.
\end{align*}
To prove this, we notice that for a fixed input $\rho^{RQ}$, the optimization problem
\[
\max_{\substack{0\leq X\leq (\cI^R\otimes\mathcal{E})(\rho^{RQ}),\\0\leq X\leq(\cI^R\otimes\mathcal{F})(\rho^{RQ})}}  \tr X
\]
can be formulated as the following semidefinite program \cite{Watrous2009}:
\begin{center}
  \begin{minipage}{2in}
    \centerline{\underline{Primal problem}}\vspace{-2mm}
    \begin{align*}
      \text{maximize:}\quad & \ip{I}{X}\\
      \text{subject to:}\quad & \Phi(X) \leq B,\\
      & X\in\pos{\mathcal{H}_{RQ}}.
    \end{align*}
  \end{minipage}
  \hspace*{12mm}
  \begin{minipage}{2in}
    \centerline{\underline{Dual problem}}\vspace{-2mm}
    \begin{align*}
      \text{minimize:}\quad & \ip{B}{Y}\\
      \text{subject to:}\quad & \Phi^{\ast}(Y) \geq I,\\
      & Y\in\pos{\mathcal{H}_{RQ}\oplus \mathcal{H}_{RQ}}.
    \end{align*}
  \end{minipage}
\end{center}
In the above formula,
$\Phi$ is the super-operator
\[
\Phi: \lin{\mathcal{H}_{RQ}} \rightarrow \lin{\mathcal{H}_{RQ}\oplus \mathcal{H}_{RQ}}
\]
as
\[
\Phi(X)
= \begin{pmatrix}X & 0\\ 0 & X\end{pmatrix},
\]
where the adjoint super-operator
\[
\Phi^{\ast}:\lin{\mathcal{H}_{RQ}\oplus \mathcal{H}_{RQ}}\rightarrow\lin{\mathcal{H}_{RQ}}
\]
is given by
\[
\Phi^{\ast}\begin{pmatrix} Z & \cdot \\ \cdot & W\end{pmatrix}
=Z+W,
\]
and
\[
B=\begin{pmatrix} (\cI^R\otimes\mathcal{E})(\rho^{RQ}) & 0 \\ 0 & (\cI^R\otimes\mathcal{F})(\rho^{RQ})\end{pmatrix}.
\]
Choose $Y=I>0$, then $\Phi^{\ast}(Y)=2I>0$. This dual program is strictly feasible. Thus, the primal value and dual value are the same \cite{Slater1950}.

Now we are going back to the original problem by considering the dual problem with $\rho^{RQ}$ ranging over all possible states. 

Let $\mathcal{B}$ denote the following set
\[
\bigg \{\begin{pmatrix} (\cI^R\otimes\mathcal{E})(\rho^{RQ}) & 0 \\ 0 & (\cI^R\otimes\mathcal{F})(\rho^{RQ})\end{pmatrix}:\rho \in \mathrm{D}(\mathcal{H}_{RQ})\bigg\}.
\]
$\mathcal{B}$ is a compact set because it is a closed bounded set in a finite-dimensional space.

According to the compactness of $\mathcal{B}$, we have the following
\begin{align*}
&\inf_{\rho\in\mathrm{D}(\mathcal{H}_{RQ})}\max_{\substack{0\leq X\leq (\cI^R\otimes\mathcal{E})(\rho),\\0\leq X\leq(\cI^R\otimes\mathcal{F})(\rho)}}  \tr X\\
=&\inf_{B\in \mathcal{B}}\min_{\substack{{\Phi^{\ast}(Y) \geq I,}\\{Y\in\pos{\mathcal{H}_{RQ}\oplus \mathcal{H}_{RQ}}}}}  \ip{B}{Y}\\
=&\min_{\substack{{\Phi^{\ast}(Y) \geq I,}\\{Y\in\pos{\mathcal{H}_{RQ}\oplus \mathcal{H}_{RQ}}}}}\inf_{B\in \mathcal{B}}\ip{B}{Y}\\
=&\min_{\substack{{\Phi^{\ast}(Y) \geq I,}\\{Y\in\pos{\mathcal{H}_{RQ}\oplus \mathcal{H}_{RQ}}}}}\min_{B\in \mathcal{B}}\ip{B}{Y}\\
=&\ip{B_0}{Y_0},
\end{align*}
for some $B_0\in \mathcal{B}$ and $Y_0\in \pos{\mathcal{H}_{RQ}\oplus \mathcal{H}_{RQ}}$.

Let $\rho_0^{RQ}$ be such that
\begin{align*}
  Y_0&=\begin{pmatrix} M & \cdot \\ \cdot & N\end{pmatrix}\\
B_0&=\begin{pmatrix} (\cI^R\otimes\mathcal{E})(\rho_0^{RQ}) & 0 \\ 0 & (\cI^R\otimes\mathcal{E})(\rho_0^{RQ})\end{pmatrix}.
\end{align*}
For $\rho_0^{RQ}$, the intersection of the supports of $(\cI^R\otimes\mathcal{E})(\rho_0^{RQ})$ and $(\cI^R\otimes\mathcal{F})(\rho_0^{RQ})$ has a non-zero element. It indicates that there exists a non-zero $0\leq G\leq (\cI^R\otimes\mathcal{E})(\rho_0^{RQ}),(\cI^R\otimes\mathcal{F})(\rho_0^{RQ})$.

According to $\Phi^{\ast}(Y_0) \geq I$, we have $M+N\geq I$. Let
\begin{align*}
\eta=&\ip{B_0}{Y_0}\\
=&\ip{(\cI^R\otimes\mathcal{E})(\rho_0^{RQ})}{M}+\ip{(\cI^R\otimes\mathcal{F})(\rho_0^{RQ})}{N}\\
\geq& \ip{G}{M}+\ip{G}{N}\\
=& \ip{G}{M+N}\\
\geq & \ip{G}{I}\\
=& \tr G>0.
\end{align*}
We can conclude that this $\eta$ satisfies the wanted property.
\end{proof}

The following observation shows that if $I\in \mathrm{span}\{E_i^{\dag}F_j\}$, then $\mathcal{E}$ and $\mathcal{F}$ cannot change the fidelity of two quantum states significantly.
 \begin{lemma} \label{fidelity}
If $I\in \mathrm{span}\{E_i^{\dag}F_j\}$, then, there exists $\zeta>0$, depending only on $\mathcal{E}$ and $\mathcal{F}$, such that for all $\rho, \sigma$ on a
potentially larger Hilbert space $RQ$,
$$F((\cI^R\otimes\mathcal{E})(\rho^{RQ}),(\cI^R\otimes\mathcal{F})(\sigma^{RQ}))\geq \zeta F(\rho,\sigma).$$
\end{lemma}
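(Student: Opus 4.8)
The plan is to reduce to pure inputs, bound the output fidelity from below by the trace norm of a fixed-size ``transition matrix,'' and then use the span hypothesis to tie that trace norm to the input overlap through a constant that does not depend on the input.

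\emph{Reduction to pure states.} First I would exploit that $\rho,\sigma$ may live in a larger space: choose Uhlmann-optimal purifications $\ket{\psi},\ket{\phi}$ of $\rho,\sigma$ (Fact~\ref{fac:Uhlmann}), so that $|\langle\psi|\phi\rangle| = F(\rho,\sigma)$. Regarding $\mathcal{E}$ as $\mathcal{E}\otimes\mathcal{I}$ on the enlarged space, we have $\mathcal{E}(\rho)=\tr_{\mathrm{aux}}[(\mathcal{E}\otimes\mathcal{I})(\psi)]$, so monotonicity of fidelity under the partial trace (Fact~\ref{fac:monotonequantumoperation}) gives $F(\mathcal{E}(\rho),\mathcal{F}(\sigma))\ge F\big((\mathcal{E}\otimes\mathcal{I})(\psi),(\mathcal{F}\otimes\mathcal{I})(\phi)\big)$. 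Hence it suffices to prove the bound for pure inputs, with $F(\rho,\sigma)$ replaced by $|\langle\psi|\phi\rangle|$.

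\emph{A trace-norm lower bound.} Next I would take Stinespring purifications of the two outputs through a common environment $\mathcal{K}$, namely $\ket{\Psi}=\sum_i (E_i\otimes I)\ket{\psi}\otimes\ket{i}$ and $\ket{\Phi}=\sum_j (F_j\otimes I)\ket{\phi}\otimes\ket{j}$. For every unitary $U$ on $\mathcal{K}$ the vector $(I\otimes U)\ket{\Phi}$ is again a purification of $(\mathcal{F}\otimes\mathcal{I})(\phi)$, so Uhlmann's theorem yields $F\ge|\bra{\Psi}(I\otimes U)\ket{\Phi}| = \big|\sum_{i,j}U_{ij}C_{ij}\big|$, where $C_{ij}=\bra{\psi}(E_i^{\dagger}F_j\otimes I)\ket{\phi}$. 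Maximizing the right-hand side over $U$ gives $F\big((\mathcal{E}\otimes\mathcal{I})(\psi),(\mathcal{F}\otimes\mathcal{I})(\phi)\big)\ge \|C\|_1$. The decisive feature is that $C$ is always an $m\times s$ matrix whose size is fixed by the Kraus data and does not grow with the dimension of $\psi,\phi$.

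\emph{Using the span.} Finally I would use that $I$ is a finite combination $I=\sum_{i,j}g_{ij}E_i^{\dagger}F_j$, i.e.\ that $I\in\mathrm{span}\{E_i^{\dagger}F_j\}$ (the regime in which the operations cannot drive overlapping inputs to orthogonal outputs). Tensoring with the identity, $\langle\psi|\phi\rangle=\bra{\psi}(I\otimes I)\ket{\phi}=\sum_{i,j}g_{ij}C_{ij}=\tr(G^{\mathrm{T}}C)$, so Hölder's inequality for Schatten norms gives $|\langle\psi|\phi\rangle|\le\|G\|_\infty\,\|C\|_1\le\|G\|_\infty\,F$. Thus $\zeta=1/\inf\{\|G\|_\infty:\ I=\sum_{i,j}g_{ij}E_i^{\dagger}F_j\}$ works: the infimum is attained over the nonempty closed affine set of admissible $G$ and is strictly positive (since $G=0$ cannot represent $I$), and it depends only on $\mathcal{E}$ and $\mathcal{F}$.

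I expect the main obstacle to be the \emph{uniformity} of $\zeta$: the inputs range over spaces of unbounded dimension, so the SDP-plus-compactness argument used for the joint case in Lemma~\ref{jointbound} does not transfer, as the relevant set is not compact across dimensions. The resolution is precisely that the transition matrix $C$ lives in the fixed space $\mathbb{C}^{m\times s}$ and that $I$ admits a finite representation there, which is what forces the constant to be dimension-free. (I note that this route needs $I\in\mathrm{span}\{E_i^{\dagger}F_j\}$, which is exactly the condition under which the fidelity cannot collapse.)
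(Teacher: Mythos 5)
Your proof is correct, and it takes a genuinely different route from the paper's at the decisive step. Both arguments start the same way: you each (rightly) treat the stated hypothesis as a typo for $I\in\mathrm{span}\{E_i^{\dag}F_j\}$, and you each reduce to pure inputs via Uhlmann-optimal purifications together with monotonicity of fidelity under the partial trace. From there the paper proceeds by a ``without loss of generality'' step, rewriting $I=\sum_i \chi_i E_i^{\dag}F_i$ as a \emph{diagonal} combination with $\{F_i\}\cup\{R_k\}$ a Kraus representation of $\mathcal{F}$, then applying strong concavity of fidelity (Fact~\ref{strong concavity}) to get $F(\mathcal{E}(\psi_{\rho}),\mathcal{F}(\psi_{\sigma}))\ge\sum_i|\langle\psi_{\rho}|E_i^{\dag}F_i|\psi_{\sigma}\rangle|$, and finishing with the triangle inequality and $\zeta=1/\max_i|\chi_i|$. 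You instead purify through the environment, use the unitary freedom there plus Uhlmann to obtain the bound $F\ge\norm{C}_1$ on the full transition matrix $C_{ij}=\bra{\psi}(E_i^{\dag}F_j\otimes I)\ket{\phi}$, and then use trace-norm/operator-norm duality (H\"older) to convert an \emph{arbitrary} representation $I=\sum_{ij}g_{ij}E_i^{\dag}F_j$ into $|\langle\psi|\phi\rangle|\le\norm{G}_\infty\norm{C}_1$. Your route buys three things: it avoids the paper's WLOG, which is in fact a nontrivial unjustified step (one must know that the linear combinations $\sum_j c_{ij}F_j$, after suitable rescaling into a contraction, can be completed by extra operators $R_k$ to a valid Kraus representation of $\mathcal{F}$); it needs no special ``diagonal'' form of the span membership; and it gives a generally tighter constant, since $\sum_i|C_{ii}|\le\norm{C}_1$ means your intermediate bound dominates the paper's, with the two constants coinciding exactly when $G$ is diagonal. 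The paper's route buys brevity, since strong concavity is already on record and no Schatten-norm duality is invoked. Two minor points in your write-up: when $m\neq s$ the two environments have different dimensions, so you should pad the smaller Kraus family with zero operators before speaking of a unitary $U$ on a common environment (standard and harmless); and attainment of the infimum over admissible $G$ is not actually needed, since any single admissible $G$ already yields a valid, dimension-independent $\zeta>0$.
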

\begin{proof}
The condition $I\in \mathrm{span}\{E_i^{\dag}F_j\}$ leads us to the existence of $\chi_{i,j}\in \mathbb{C}$ such that
\be
I=\sum_{i,j=1}^m \chi_{i,j} E_i^{\dag}F_j.
\ee
Using polar decomposition of the coefficient matrix $\chi_{i,j}$, we can always assume that
\be
I=\sum_{i=1}^m \chi_{i} E_i^{\dag}F_i,
\ee
and $\chi_i\geq 0$. We use $\chi=\max_{i} \chi_i$ to denote the largest $\chi_i$.

For any $\rho^{RQ}$ and $\sigma^{RQ}$, by Uhlmann's Theorem \ref{fac:Uhlmann}, there exist $\ket{\psi^{RQT}_{\rho}}$ and $\ket{\psi^{RQT}_{\sigma}}$ being $\rho$ and $\sigma$'s purifications respectively, and
$$F(\rho^{RQ},\sigma^{RQ})=F(\psi^{RQT}_{\rho},\psi^{RQT}_{\sigma})=|\langle\psi^{RQT}_{\rho}|{\psi^{RQT}_{\sigma}}\rangle|.$$
Now we can have the following
\begin{align*}
&F((\cI^R\otimes\mathcal{E})(\rho^{RQ}),(\cI^R\otimes\mathcal{F})(\sigma^{RQ}))\\
\geq&F((\cI^{RT}\otimes\mathcal{E})(\psi^{RQT}_{\rho}),(\cI^{RT}\otimes\mathcal{F})(\psi^{RQT}_{\sigma}))\\
=&F[\sum_{i=1}^m (I^{RT}\otimes E_i) \psi^{RQT}_{\rho}(I^{RT}\otimes E_i)^{\dag}, \\
&\ \sum_{i=1}^m (I^{RT}\otimes F_i)\psi^{RQT}_{\sigma} (I^{RT}\otimes F_i)^{\dag}]\\
\geq&\sum_{i=1}^m F[ (I^{RT}\otimes E_i) \psi^{RQT}_{\rho}(I^{RT}\otimes E_i)^{\dag},\\
&(I^{RT}\otimes F_i)\psi^{RQT}_{\sigma} (I^{RT}\otimes F_i)^{\dag}]\\
=&\sum_{i=1}^m |\langle \psi^{RQT}_{\rho}| (I^{RT}\otimes E_i)^{\dag}(I^{RT}\otimes F_i) |\psi^{RQT}_{\sigma}\rangle|\\
\geq & \frac{1}{\chi}\sum_{i=1}^m |\langle\psi^{RQT} _{\rho}|\chi_i(I^{RT}\otimes E_i^{\dag}F_i) |\psi^{RQT}_{\sigma}\rangle|\\
\geq & \frac{1}{\chi} |\langle \psi^{RQT}_{\rho}|(I^{RT}\otimes(\sum_{i=1}^m\chi_i E_i^{\dag}F_i) |\psi^{RQT}_{\sigma}\rangle|\\
=& \frac{1}{\chi} |\langle \psi^{RQT}_{\rho}|I^{RQT} |\psi^{RQT}_{\sigma}\rangle|\\
=& \frac{1}{\chi} |\langle \psi^{RQT}_{\rho}|\psi^{RQT}_{\sigma}\rangle|\\
=& \frac{1}{\chi} F(\psi^{RQT}_{\rho},\psi^{RQT}_{\sigma})\\
=& \frac{1}{\chi} F(\rho^{RQ},\sigma^{RQ}).
\end{align*}
The first inequality is due to Fact \ref{fac:monotonequantumoperation}, the monotonicity of the fidelity under partial trace. The second inequality is due to Fact \ref{strong concavity}, the strong concavity of the fidelity, and positive homogeneity.

Therefore, we choose $\zeta=\frac{1}{\chi}$.
\end{proof}

\section{main results}

Our main result is as follows
\begin{theorem} \label{main}
The Chernoff exponent for quantum operations, Eq. \ref{defineChernoffbound}, is finite if and only if they cannot be distinguished perfectly.
\end{theorem}

For two distinct quantum operations, $\mathcal{E}$ and $\mathcal{F}$, it is
straightforward to verify that $P_{err} \le \exp(-n\xi^\prime)$ for some $\xi^\prime>0$ by observing the following process. First, one can always find an input state $\rho$ such that $\mathcal{E}(\rho)$ and $\mathcal{F}(\rho)$ are distinct. Then we feed $\rho$ as input through the device for $n$ times. After that, the problem becomes to distinguish $\mathcal{E}(\rho)^{\otimes n}$ and $\mathcal{F}(\rho)^{\otimes n}$. Invoking the celebrated result on the Chernoff exponent for quantum states, we know that the error probability of distinguishing two different quantum states with identical copies decays according to an exponential function. Notice that this protocol only provides an upper bound on the minimal error probability of distinguishing $\mathcal{E}$ and $\mathcal{F}$, so one can conclude that $P_{err} \le \exp(-n\xi^\prime)$ for some $\xi^\prime>0$.

The above arguments show that the error decays at least exponentially. In other words, $\xi_{\cE,\cF}$ is greater than $0$. However, this scheme can be far from optimal. Perfect discrimination between unknown processes chosen from a finite set is shown to be possible. For two quantum operations that can be distinguished perfectly, $\xi_{\cE,\cF}=\infty$, we prove that this is the only case where $\xi_{\cE,\cF}=\infty$. Moreover, we provide an easy computable upper bound of $\xi_{\cE,\cF}$ for quantum operations that can not be distinguished perfectly, i.e., $P_{err} \ge \exp(-n\xi)$, where the parameter $\xi$ is a positive constant that depends on the two operations only.

\begin{proof} The only if part of Theorem \ref{main} is trivial. The if part follows Proposition 1 in Section II, and we prove it for prior probability distribution $\Pi_0=\Pi_1=1/2$. Also, we prove the Chernoff exponent is independent of a prior distribution in Proposition 2. 

To prove the only if part of Theorem \ref{main} under distribution $\Pi_0=\Pi_1=1/2$, we only need to show that when either condition in Proposition 1 is violated, the error probability is at least an exponential function of the number of channel uses.

First, we suppose $\mathcal{E}$ and $\mathcal{F}$ are joint, in the sense that the produced quantum states have non-zero overlapping supports for any common input
state, we show that there exists $\eta>0$ such that $P_{err,n}\geq {\eta^n}/{2}$ in the following:

Refer to Figure 1 for our notations.
By employing Lemma \ref{jointbound}, we observe that there exists $0\leq A_1\leq \rho_1,\sigma_1$ such that $\tr A_1\geq \eta$. Then, $0\leq A_1'=\mathcal{G}_1(A_1)\leq \rho_1',\sigma_1'$ such that $\tr A_1'=\tr A_1\geq \eta$. Then, there exists $0\leq A_2\leq \rho_2,\sigma_2$ such that $\tr A_2\geq \eta^2$. Thus, $0\leq A_2'=\mathcal{G}_2(A_2)\leq \rho_1',\sigma_1'$ such that $\tr A_2'=\tr A_2\geq \eta^2$ $\cdots$
There exists $0\leq A_n\leq \rho_n,\sigma_n$ such that $\tr A_n\geq \eta^n$.

By Helstrom's celebrated result on state discrimination \cite{Helstrom1967}, we know that the discrimination error satisfies the following
\begin{align*}
P_{err,min,n}&=\inf_{\rho_n,\sigma_n}\frac{1}{2}(1-\frac{\tr|\rho_n-\sigma_n|}{2})\\
&=\inf_{\rho_n,\sigma_n}\frac{1}{2}(1-\frac{\tr|\rho_n-A_n-\sigma_n+A_n|}{2})\\
&\geq \inf_{\rho_n,\sigma_n}\frac{1}{2}(1-\frac{\tr(\rho_n-A_n)+\tr(\sigma_n-A_n)}{2})\\
&= \frac{\tr A_n}{2}\\
&\geq \frac{\eta^n}{2}.
\end{align*}
The first inequality is according to the triangle inequality and $0\leq A_n\leq \rho_n,\sigma_n$.

Second, suppose two quantum
operations $\mathcal{E}$ and $\mathcal{F}$ can not transform non-orthogonal states into orthogonal
states. This is equivalent to $I\in\mathrm{span}\{E_i^*F_j\}$, as illustrated in Remark 1 at the end of Section II. We show in the following that there exists $\mu>0$ such that $P_{err,n}\geq \mu^n/4$.
The proof of this part is according to the observation that if two quantum operations cannot make nonorthogonal states orthogonal, they cannot change their fidelity significantly.

Refer to Figure 1 for our notations.
By employing Lemma \ref{fidelity}, we observe that there exists $\zeta>0$ such that after $n$ uses of the unknown quantum operation, the possible outcome states $\rho_n$ and $\sigma_n$ satisfy the following:
\begin{align*}
F(\rho_n,\sigma_n)&\geq \zeta F(\rho_{n-1}',\sigma_{n-1}')\\
                  &\geq \zeta F(\rho_{n-1},\sigma_{n-1})\\
                  &\geq \zeta^2 F(\rho_{n-2}',\sigma_{n-2}')\\
                  &\cdots\\
                  &\geq \zeta^{n-1} F(\rho_{1},\sigma_{1})\\
                   &\geq \zeta^{n} F(\psi,\psi)\\
                   &=\zeta^{n},
\end{align*}
where $F(\cdot,\cdot)$ denotes the fidelity of quantum states.

The first inequality is due to Lemma \ref{fidelity}.
The second inequality is due to the monotonicity of fidelity under any quantum operation.

According to the relation between fidelity and trace distance, we have
\begin{eqnarray*}
P_{err,min,n}&=&\inf_{\rho_n,\sigma_n}\frac{1}{2}(1-{\tr|\rho_n-\sigma_n|/2})\\
&\geq &\frac{1}{2}(1-\sqrt{1-\zeta^{2n}})\\
&\geq &\frac{\zeta^{2n}}{4},
\end{eqnarray*}
where the minimization ranges across all possible output $\rho_n$ and $\sigma_n$.

Therefore, we can choose $\mu=\zeta^2$.

Putting these two conditions together, we obtain that for indistinguishable quantum operations $\cE,\cF$ under uniform distribution,
\begin{equation}
\xi_{\cE,\cF}\leq \min\{-\log \eta, -\log \mu\}.
\end{equation}

\end{proof}

 If we have more than two quantum operations, suppose we have a quantum device that is secretly chosen from $\{\cE_1,\cdots \cE_r\}$, a known set of quantum operations according to prior probability distribution $\{\Pi_1,\cdots,\Pi_r\}$. Our goal is to see which quantum operation the quantum device implements by using the device many times. The definition of the Chernoff exponent for two quantum operations Eq.(\ref{defineChernoffbound}) can be easily generalized into multiple quantum operations, where $P_{err,min,n}$ now is defined as the infimum error probability for distinguishing multiple quantum operations with $n$ uses. One can prove that the Chernoff exponent for multiple quantum operations shares the same properties as the Chernoff exponent for two quantum operations. This exponent does not depend on the prior probability distribution, and it is infinite if these quantum operations are mutually perfectly distinguishable. Moreover, it is, at most, the minimal mutual Chernoff exponent for quantum operations. 

\begin{Prop}
 The Chernoff exponent for multiple quantum operations, Eq. \ref{defineChernoffbound}, does not depend on the prior distribution. Moreover, the multi-channel Chernoff exponent is upper bounded by the smallest pairwise Chernoff exponent.
\end{Prop}

\begin{proof}

For prior $(\Pi_1,\Pi_2,\cdots,\Pi_r)$, $\Pi_1\geq\Pi_2\geq\cdots\Pi_r>0$ and fixed $n$, we use
$P_{err,min,n,\Pi}$ to denote the infimum error probability. $P_{err,min,n}$ denotes the infimum error probability for uniform prior $(1/r,1/r,\cdots,1/r)$.
For any discrimination scheme, we let $1-p_{n,i}$ be the probability of correctly identifying the $i$-th channel. Then we have
\begin{align*}
 \Pi_1\sum_{i=1}^r p_{n,i}\geq\sum_{i}\Pi_i p_{n,i}\geq \Pi_r\sum_{i=1}^r p_{n,i} \geq  \frac{\Pi_r}{2}(p_{n,k}+p_{n,l})
\end{align*}
for any $1\leq k,l\leq r$.

Since this inequality holds for the error probabilities in any strategy, one can just take the infimum over all strategies in each term of the inequality and have
\begin{align*}
r\Pi_1P_{err,min,n}\geq&  P_{err,min,n,\Pi}\\
\geq& r\Pi_rP_{err,min,n}\\
 \geq&  \Pi_r P_{err,min,n,\{k,l\}}
\end{align*}
Therefore,
\begin{align*}
-\varlimsup_{n\to \infty} \frac{P_{err,min,n}}{n}=&-\varlimsup_{n\to \infty} \frac{P_{err,min,n,\Pi}}{n} \\\leq& -\varlimsup_{n\to \infty} \frac{P_{err,min,n,\{k,l\}}}{n}.
\end{align*}
That is, the Chernoff exponent for multiple quantum operations does not depend on prior.
\end{proof}
According to the proof, we can also conclude that
\begin{Cor}
The Chernoff exponent for multiple quantum operations, Eq. \ref{defineChernoffbound}, is infinite if and only if the quantum operations are mutually perfectly distinguishable.
\end{Cor}
\begin{proof}
Suppose we are given a quantum operation $\cE$ being one of the quantum operations $\cE_1,\cE_2,\cdots,\cE_r$, and any two quantum operations can be distinguished perfectly. Let a protocol produce orthogonal quantum states $\rho_1$ and $\rho_2$ for quantum operations $\cE_1$ and $\cE_2$, respectively.

Now we run the protocol on $\cE$ and measure the output. We employ the measurement which can distinguish $\rho_1$ and $\rho_2$ perfectly. If the measurement outcome corresponds to $\rho_1$, then we know $\cE$ can not be $\cE_2$; otherwise, it can not be $\cE_1$.

Therefore, via finite uses of $\cE$, we can eliminate one candidate. By repeating this procedure, we can conclude that the Chernoff exponent is $\infty$.

Otherwise, if two quantum operations, say $\cE_1$ and $\cE_2$, can not be distinguished perfectly. According to the proof of Proposition 2, the multi-channel Chernoff exponent is upper bounded by the smallest pairwise exponent which is no more than the Chernoff exponent of $\cE_1$ and $\cE_2$, a finite number by Theorem 1.
\end{proof}
\section{Conclusion and Open Problems}

In this paper, we introduce the Chernoff exponent for quantum operations. We show the Chernoff exponent is finite if and only if the operations are not perfectly distinguishable. More precisely, we provide computable upper bounds of the Chernoff exponent by proving lower bounds on the error probability of distinguishing quantum operations with $n$ uses.
Our result is an asymptotic generalization of the diamond norm.

There are several open questions. One relates to the local operations and classical communication (LOCC)-Chernoff distance.
Motivated by the quantum Chernoff theorem \cite{ACMMABV,NussbaumSzkola06}, the LOCC-Chernoff exponent studies the distinguishability of two bipartite
mixed states under the constraint of LOCC, in the limit of many copies \cite{CVMB2010, MW2009}. There is a significant difference between the LOCC Chernoff exponent and the standard Chernoff exponent. Orthogonality does not indicate perfect LOCC distinguishability. More precisely, there exist quantum states which cannot be locally distinguished but multicopy makes them perfectly distinguishable \cite{YDY12, YDY2014}. This behavior is similar to the discrimination of quantum operations.
A fundamental question regarding the LOCC Chernoff exponent is still not answered: For two quantum states that are not LOCC perfectly distinguishable, even in the limit of many copies, does the LOCC discrimination error always decay exponentially? The first difficulty is we do not have a characterization of LOCC distinguishability of quantum states, even though this problem has been studied for more than 20 years \cite{YDY11, WSHV00, GKRS+01, BDMS+99, WAT05, HMM+06, BCJ+2015}.

We thank the editor and the anonymous reviewers whose comments have greatly improved this manuscript.
This work is supported by ARC Discovery Early Career Researcher Award DE180100156 and ARC Discovery Program DP210102449.

\begin{thebibliography}{CEM{\etalchar{+}}15}

\bibitem{Acin2001}
A. Acin,
\newblock "Statistical distinguishability between unitary operations."
\newblock {\em Physical Review Letters}, \textbf{87}(17): 177901, 2001.

\bibitem{Aharonov1997}
D. Aharonov, A. Kitaev, and N. Nisan,
\newblock "Quantum circuits with mixed states."
\newblock {\em Proceeding of the
Thirtieth Annual ACM Symposium on Theory of Computation}, pp. 20-30, 1997.

\bibitem{ACMMABV}
K. M. R. Audenaert, J. Casamiglia, R. Munoz-Tapia, E. Bagan,
Ll. Masanes, A. Acin, and F. Verstraete,
\newblock "Discriminating states: the quantum Chernoff bound."
\newblock {\em Physical Review Letters}, \textbf{98}(16): 160501, 2007.

\bibitem{BCJ+2015}
S. Bandyopadhyay, A. Cosentino, N. Johnston, V. Russo, J. Watrous, and N. Yu,
\newblock "Limitations on separable measurements by convex optimization."
\newblock {\em IEEE Transactions on Information Theory}, \textbf{61}(6): 3593, 2015.

\bibitem{BDMS+99}
C. H. Bennett, D. P. DiVincenzo, T. Mor, P. W. Shor, J. A. Smolin, and B. M. Terhal,
\newblock "Unextendible product bases and bound entanglement."
\newblock{\em Physical Review Letters}, \textbf{82}(26): 5385, 1999.

\bibitem{Chernoff52}
H. Chernoff,
\newblock "A measure of asymptotic efficiency for tests of a hypothesis based on the sum of observations."
\newblock {\em The Annals of Mathematical Statistics}, \textbf{23}(4): 493, 1952.

\bibitem{CMW2016}
Tom Cooney, Milan Mosonyi, and Mark M. Wilde,
\newblock "Strong converse exponents for a quantum channel discrimination problem and quantum-feedback-assisted communication."
\newblock {\em Communications in Mathematical Physics}, \textbf{344}(3):  797-829, 2016.

\bibitem{CVMB2010}
J. Calsamiglia, J. I. de Vicente, R. Mu\~noz-Tapia, and E. Bagan,
\newblock "Local discrimination of mixed states."
\newblock {\em Physical Review Letters}, \textbf{105}(8): 080504, 2010.

\bibitem{Duan2007}
R. Duan, Y. Feng, and M. Ying,
\newblock "Entanglement is not necessary for perfect discrimination between unitary operations."
\newblock {\em Physical Review Letters}, \textbf{98}(10): 100503, 2007.

\bibitem{Duan2009}
R. Duan, Y. Feng, and M. Ying,
\newblock "Perfect distinguishability of quantum operations."
\newblock {\em Physical Review Letters}, \textbf{103}(21): 210501, 2009.

\bibitem{FuchsG99}
C. A. Fuchs, J. Van De Graaf,
\newblock "Cryptographic distinguishability measures for quantum-mechanical states."
\newblock {\em IEEE Transactions on Information Theory}, \textbf{45}(4): 1216, 1999.
  1999.

\bibitem{GKRS+01}
S. Ghosh, G. Kar, A. Roy, A. Sen(De) and U. Sen,
\newblock "Distinguishability of Bell states."
\newblock {\em Physical Review Letters}, \textbf{87}(27): 277902, 2001.

\bibitem{Hayashi09}
M. Hayashi,
\newblock "Discrimination of two channels by adaptive methods and its application to quantum system."
\newblock {\em IEEE Transactions on Information Theory}, \textbf{55}(8): 3807, 2009.

\bibitem{Helstrom1967}
C. W. Helstrom,
\newblock "Detection theory and quantum mechanics."
\newblock {\em Information and Control}, \textbf{10}(3): 254, 1967.

\bibitem{HHDW2010}
A. W.  Harrow, A. Hassidim, D. W. Leung, and J. Watrous
\newblock "Adaptive versus non-adaptive strategies for quantum channel discrimination."
\newblock {\em Physical Review A}, \textbf{81}(3): 032339, 2010.

\bibitem{HMM+06}
M. Hayashi, D. Markham, M. Murao, M. Owari and S. Virmani,
\newblock "Bounds on multipartite entangled orthogonal state discrimination using local operations and classical communication."
\newblock {\em Physical Review Letters}, \textbf{96}(4): 040501, 2006.

\bibitem{Holevo1972}
A. S. Holevo,
\newblock "An analog of the theory of statistical decisions in noncommutative probability theory."
\newblock {\em Transactions of the Moscow Mathematical Society}, \textbf{26}: 133, 1972.

\bibitem{Ji2006}
Z. Ji, Y. Feng, R. Duan, and M. Ying,
\newblock "Identification and distance measures of measurement apparatus."
\newblock {\em Physical Review Letters}, \textbf{96}(20): 200401, 2006.

\bibitem{Kitaev1997}
A. Kitaev,
\newblock "Quantum computations: Algorithms and error correction."
\newblock {\em Russian Mathematical Surveys}, \textbf{52}(6): 1191, 1997.

\bibitem{Laing2009}
A. Laing, T. Rudolph, and J. L. O'Brien,
\newblock "Experimental quantum process discrimination."
\newblock {\em Physical Review Letters}, \textbf{102}(16): 160502, 2009.


\bibitem{LQ2008}
L. Li and D. Qiu,
\newblock "Local entanglement is not necessary for perfect discrimination between unitary operations acting on two qudits by local operations and classical communication."
\newblock {\em Physical Review A}, \textbf{77}(03): 032337 , 2008.


\bibitem{Li2016}
K. Li,
\newblock "Discriminating quantum states: the multiple Chernoff distance."
\newblock {\em Annals of Statistics}, 44 (4): 1661-1679 2016.

\bibitem{MW2009}
W. Matthews, A. Winter,
\newblock "On the Chernoff distance for asymptotic LOCC discrimination of bipartite quantum states."
\newblock {\em Communications in Mathematical Physics}, \textbf{285}: 161, 2009.

\bibitem{NielsenC00}
M. A. Nielsen, I. Chuang,
\newblock "Quantum computation and quantum information."
\newblock {\em Cambridge University Press}, Cambridge, UK, 2000.

\bibitem {NussbaumSzkola06}
M. Nussbaum, A. Szko{\l}a,
\newblock "The Chernoff lower bound for symmetric quantum hypothesis testing."
\newblock {\em The Annals of Statistics}, \textbf{37}(2): 1040, 2009.



\bibitem{Parth01}
K. R. Parthasarathy,
\newblock "On consistency of the maximum likelihood method in testing multiple quantum hypotheses."
\newblock {\em Stochastics in Finite and Infinite Dimensions}, Birkh\"auser Boston, 361, 2001.


\bibitem{Puzzuoli2016}
D. Puzzuoli, J. Watrous,
\newblock "Ancilla dimension in quantum operation discrimination."
\newblock {\em Annales Henri Poincar\'{e}}, Springer International Publishing, 2016.


\bibitem{TW2016}
Masahiro Takeoka, and Mark M. Wilde,
\newblock "Optimal estimation and discrimination of excess noise in thermal and amplifier channel."
\newblock {\em arXiv:1611.09165}, 2016.

\bibitem{Uhl77}
A. Uhlmann,
\newblock "The ``transition probability" in the state space of a *-algebra",
\newblock {\em Reports on Mathematical Physics}, \textbf{9}(2): 273, 1976.

\bibitem{WAT05}
J. Watrous,
\newblock "Bipartite subspaces having no bases distinguishable by local operations and classical communication."
\newblock {\em Physical Review Letters}, \textbf{95}(8): 080505, 2005.

\bibitem{Watrous2009}
J. Watrous,
\newblock "Semidefinite programs for completely bounded norms."
\newblock {\em Theory of Computing}, \textbf{5}(11): 217, 2009.

\bibitem{Watrous2015}
J. Watrous,
\newblock "Theory of quantum information."
\newblock {\em University of Waterloo Fall}, 128, 2011.

\bibitem{WSHV00}
J. Walgate, A. J. Short, L. Hardy and V. Vedral,
\newblock "Local distinguishability of multipartite orthogonal quantum states."
\newblock {\em Physical Review Letters}, \textbf{85}(23): 4972, 2000.

\bibitem{YDY11}
N. Yu, R. Duan and M. Ying,
\newblock "Any $2\otimes n$ subspace is locally distinguishable."
\newblock {\em Physical Review A}, \textbf{84}(1): 012304, 2011.

\bibitem{YDY12}
N. Yu, R. Duan and M. Ying,
\newblock "Four locally indistinguishable ququad-ququad orthogonal maximally entangled states."
\newblock {\em Physical Review Letters}, \textbf{109}(2): 020506, 2012.

\bibitem{YDY2014}
N. Yu, R. Duan, and M. Ying,
\newblock "Distinguishability of quantum states by positive operator-valued measures with positive partial transpose."
\newblock {\em IEEE Transactions on Information Theory}, \textbf{60}(4): 2069, 2014.


\bibitem{ZZG2007}
X. F. Zhou, Y. S. Zhang, and G. C. Guo
\newblock "Unitary Transformations Can Be Distinguished Locally."
\newblock {\em Physical Review Letters}, \textbf{99}(17): 170401, 2007.


\bibitem{Slater1950}
M. Slater, 
\newblock "Lagrange Multipliers Revisited."
\newblock Cowles Commission Discussion Paper No. 403 (Report).











\end{thebibliography}

\newcommand{\etalchar}[1]{$^{#1}$}

\begin{IEEEbiographynophoto}{Nengkun Yu}
is a Senior Lecturer in the Centre for Quantum Software and Information, University of Technology Sydney. He received the B.S. and Ph.D. degrees from the Department of Computer Science and Technology, Tsinghua University, Beijing, China, in July of 2008 and 2013. From January 2014 to July 2016, Nengkun was a postdoc at the Institute for Quantum Computing at the University of Waterloo, Canada. His research focuses on quantum computing. 
\end{IEEEbiographynophoto}

\begin{IEEEbiographynophoto}{Li Zhou}
is a postdoc at Max Planck Institute for Security and Privacy. He received my Ph.D. in Computer Science and Technology from Tsinghua University in 2019. His research focuses on quantum programs and protocols, including static analysis and verification, and runtime debugging and testing. 
\end{IEEEbiographynophoto}

\end{document}